\date{April 14$^{\rm th}$  2015}
\newcommand{\proj}{P} 
\newcommand{\ef}{\xi} 
\newcommand{\N}{\mathbb{N}} 
\newcommand{\C}{\mathbb{C}} 
\newcommand{\Z}{\mathbb{Z}} 
\newcommand{\R}{\mathbb{R}} 
\newcommand{\tf}{\tilde{f}}
\newtheorem{proposition}{Proposition}[section]
\newtheorem{theorem}[proposition]{Theorem}
\newtheorem{lemma}[proposition]{Lemma}
\newtheorem{definition}[proposition]{Definition}
\newtheorem{remark}[proposition]{Remark}
\newtheorem{hypothesis}{Hypothesis}
\newcommand{\ri}{i}
\newcommand{\re}{\mathrm{e\,}}
\newcommand{\bx}{\mathbf{x}}
\newcommand{\id}{\mathbf{1}}
\newcommand{\style}{\displaystyle}
 \newcommand{\e}{\mathrm{e}}
\newcommand{\dd}{\mathrm{d}}
\newcommand{\scal}[1]{\langle #1 \rangle}
\newcommand{\ro}{\varrho}
\newcommand{\com}[1]{\left[#1\right]}
\newcommand{\norm}[1]{\left\lVert #1 \right\rVert}
\renewcommand{\Re}{\mathrm{Re}}
\newcommand{\bsigma}{\boldsymbol{\sigma}}
\newcommand{\bA}{\mathbf{A}}
\newcommand{\F}{\mathcal{F}}
\newcommand{\vp}{\varphi}
\title[Dynamical localization of  Dirac particles]{Dynamical localization of  Dirac particles
  in electromagnetic fields with dominating magnetic potentials}
\author{Jean-Marie Barbaroux}
\address{Jean-Marie Barbaroux\\
 Aix-Marseille Universit\'e, CNRS, CPT, UMR 7332, 13288 Marseille, France\\
 and Universit\'e de Toulon, CNRS, CPT, UMR 7332, 83957 La Garde, France.}
\email{barbarou@univ-tln.fr}
\author{Josef Mehringer}
\address{Josef Mehringer\\
Mathematisches Institut\\
Ludwig-Maximilians-Universit\"at\\
Theresienstra{\ss}e 39\\
D-80333 M\"unchen, Germany.}
\email{josef.mehringer@math.lmu.de}
\author{Edgardo Stockmeyer}
\address{ Edgardo Stockmeyer\\ Instituto de F\'\i sica\\
Pontificia Universidad Cat\'olica de Chile\\
Vicu\~na Mackenna 4860\\
 Santiago 7820436, Chile.}
\email{stock@fis.puc.cl}
\author{Amal Taarabt}
\address{Amal Taarabt\\ Instituto de F\'\i sica\\
Pontificia Universidad Cat\'olica de Chile\\
Vicu\~na Mackenna 4860\\
 Santiago 7820436, Chile.}
\email{ataarabt@fis.puc.cl}%
\subjclass[2010]{Primary 81Q10; Secondary 46N50, 81Q37, 34L10, 47A10}
\keywords{Dirac operator, graphene, dynamical localization, dense
  point spectrum}
\begin{document}
\begin{abstract}
  We consider two-dimensional massless Dirac operators in a radially
  symmetric electromagnetic field. In this case the fields may be
  described by one-dimensional electric and magnetic potentials $V$
  and $A$.  We show dynamical localization in the regime when
  $\style\lim_{r\to\infty}|V|/|A |<1$, where dense point spectrum occurs.
\end{abstract}

\maketitle
\section{Introduction}
Graphene is a two dimensional material consisting of carbon atoms
arranged in a honeycomb lattice which was isolated in 2004
\cite{Novoselov2004}. Behind its remarkable properties such as Klein
tunneling and finite minimal conductivity \cite{katsnelson2006chiral}
stays the fact that at low excitations energies the dynamics of charge carriers is
described by the massless two-dimensional Dirac operator
\cite{castro2009electronic}. For technological devices based on graphene one
needs the ability to confine and control the mobility of charge carriers. However,
confining Dirac particles is not an easy task due the so-called Klein
effect, where particles are able to penetrate electric potential walls
\cite{katsnelson2006chiral} with very little reflexion index.  In \cite{giavaras2009magnetic} it was
argued that in  presence of rotational symmetric electric and
magnetic fields one could confine or deconfine Dirac particles by
manipulating the strength of the fields at infinity, i.e., far away
from the sample. Our main result is a dynamical statement on this
effect and a continuation of a recent work \cite{MS2014B} by
two of the present authors. Before presenting  the result let us explain this
phenomenon with more mathematical details.

Denote by $H$ the two-dimensional massless Dirac operator coupled to a
radially symmetric  field ${\bf E}=E\hat{r}$ on the plane and
a radially symmetric transversal magnetic field $B$. If
the fields are sufficiently regular $H$ is a self-adjoint operator densely defined in $L^2(\R^2,\C^2)$ acting as
\begin{equation}\label{op H}
 H = \bsigma\cdot(-i\nabla-\bA) + V,
\end{equation}
where $V:\R^2\to\R$ is the  electric potential satisfying
\begin{align*}
  V(\bx)=-\int_0^{|\bx|} E(s) ds\equiv V(r)
\end{align*}
(abusing  notation, we write $ V(r)$
to denote $V(\bx)$,  where $r=|\bx| \in [0,\infty)$ is the standard
radial variable). Here $\bsigma=(\sigma_1,\sigma_2)$ is a matrix-valued vector whose components are the
first two Pauli matrices
$$\sigma_1=\begin{pmatrix}0&1\\1&0\end{pmatrix}, \
\sigma_2=\begin{pmatrix}0&-i\\i&0 \end{pmatrix}.$$
The vector potential $\bA=(A_1,A_2):\R^2\to\R^2$ generates the magnetic field $B$, with
$B=\partial_1A_2-\partial_2 A_1$. We choose the rotational gauge, i.e., we set
\begin{align*}
  \bA(\bx):=\frac{1}{r} A(r) \begin{pmatrix}
-x_2\\x_1
\end{pmatrix},\quad\mbox{with}\quad A(r)=\frac1r \int_0^rB(s)s
\dd s.
\end{align*}
We note that, besides some local regularity requirements for $(V,A)$,
$H$ is essentially self-adjoint on $C_0^\infty(\R^2,\C^2)$  independently of the growth
rate of $A$ and $V$ (see \cite{Chernoff77}).

In this setting there exists a unitary transform  \cite{Thaller} (see
also \cite[Section 6]{KS})
\begin{equation}\label{U}
\mathcal{U}: L^2(\R^2,\C^2)\longrightarrow L^2(\R^+,\C^2)\otimes \ell^2(\mathbb{Z})\, ,
\end{equation}
such that the operator $H$ can be written as a direct sum of operators on the half-line
\begin{equation}\label{dir sum}
 \mathcal{U}H \mathcal{U}^*=\bigoplus_{j\in\Z} h_j,
\end{equation}
where
\begin{equation}\label{h_j}
 h_j=-i\sigma_2 \partial_r+\sigma_1(A-\tfrac{m_j}{r}) +V \
 \quad{\mbox{on}}\quad
L^2(\R^+,\C^2),
\end{equation}
with $m_j=j+\tfrac{1}{2}$ for $j\in\Z$. Clearly, the spectra of $H$
and $h_j$ are related through
\begin{align}
\sigma(H)=\overline{\bigcup_{j\in \Z} \sigma(h_j)},\quad \sigma_c(H)=\overline{\bigcup_{j\in \Z} \sigma_c(h_j)},\quad \mbox{and}\quad\sigma_{pp}(H)={\bigcup_{j\in \Z} \sigma_{pp}(h_j)}.
\end{align}
For the operators $h_j$ we have the following properties, assuming
sufficiently regular fields: If $A(r)\to \infty$ as
$r\to \infty$ and
\begin{align}
  \label{eq:9l}
  \overline{\lim _{r\to\infty} }\left|
    \frac{V(r)}{A(r)}\right|<1,
\end{align}
then the spectrum of $h_j$ is discrete for each $j\in \Z$ (see
\cite[Proposition 1]{MS2014} for the precise regularity
conditions). In contrast, if $V(r)\to \infty$ as
$r\to \infty$ and
\begin{align}
  \label{eq:9d}
  \overline{\lim _{r\to\infty} }\left|
    \frac{A(r)}{V(r)}\right|<1,
\end{align}
then the spectrum of $h_j$ equals the whole real line and it is purely
absolutely continuous \cite[Propsition 2]{KMSYamada}. This suggests
delocalized particles in the regime given by \eqref{eq:9d} and
confined particles in the one given by \eqref{eq:9l}. However, the
latter is not obvious since, for fields satisfying \eqref{eq:9l}, $H$
may have dense point spectrum (see \cite{miller1980quantum} and
\cite[Theorem 7.10]{Thaller} for the case when $B$ decays at infinity
and \cite{MS2014} for the case when $B$ is not decaying
at infinity). We recall that dense point spectrum may lead to
non-trivial dynamics. In fact, in this case, it is only known that the wave-packet
spreading is sub-ballistic \cite{Simon1990} (the result stated in
\cite{Simon1990} is for Laplace-type operators but can easily be
adapted for the Dirac case). Moreover, there are examples of systems
with pure point spectrum where
the spreading rate is arbitrarily close to the ballistic one
\cite{RJLS2,RJLS96} (see also \cite{BT99}).

Concerning dynamical results we know that particles in electromagnetic
fields satisfying \eqref{eq:9d} behave ballistically, i.e., for any
finite energy state $\psi\in L^2(\R^2,\C^2)$ and $\kappa>0$ one has
(see \cite{MS2014B})
\begin{align*}
  \frac1T \int_0^\infty \left\| |\bx|^{\kappa/2} \e^{-iHt}\psi \right\|^2 dt \sim T^\kappa,
  \quad \mbox{for large}\quad T>0.
\end{align*}
The main result of this work is that under condition
\eqref{eq:9l} the operator $H$ exhibits  dynamical localization, i.e,
for any $\kappa>0$, for any finite energy interval $I$, and for any
state $\psi\in L^2(\R^2,\C^2)$, with sufficient  regularity in the
angular variable (depending on $\kappa$; see \eqref{eq:133}), holds
\begin{align*}
  \sup_{t\ge 0} \left\||\bx|^{\kappa/2}\ \e^{-\ri t H} \proj_I(H) \psi\right\|^2 <\infty,
\end{align*}
where $P_I(H)$ is the spectral projection of $H$ onto $I$.

Let us now state our assumptions and result more precisely:
\begin{hypothesis}\label{A1}
$A,V\in {C}^1(\R^+,\R)$ and they satisfy
\begin{align}\label{con0}
& |A(r)| \to\infty \ \quad{as} \ \ r\to\infty,\\\label{con1}
  &\overline{\lim _{r\to\infty} }\left|
    \frac{V(r)}{A(r)}\right|<1,\\
\label{con2}
&\lim _{r\to\infty} \left|
    \frac{A'(r)}{A^2(r)}\right|=0.
\end{align}
\end{hypothesis}
Recall that $\mathcal{U}$ (see \eqref{U}) is the unitary map that decomposes $H$ in the
direct sum of the operators $h_j$. For a given $\psi\in L^2(\R^2,\C^2)$ we write
\begin{align}
  \label{eq:14}
  \mathcal{U}\psi=\mathop\oplus_{j\in\Z} \varphi_j,
  \quad\mbox{with}\quad \varphi_j\in L^2(\R^+,\C^2).
\end{align}

Our main result is the following theorem.
\begin{theorem}\label{dyn loc}
  Let $\kappa>0$, $I\subset\R$ be a bounded energy interval and let $\proj_I(H)$ be the spectral projection of $H$
  onto $I$. Assume that $A$ and $V$ fulfill Hypothesis \ref{A1} and let $\psi\in
  L^2(\R^2,\C^2)$ be a normalized state such that 
  $\mathcal{U}\psi=\style\mathop\oplus_{j\in\Z} \varphi_j$ satisfies
\begin{align}
  \label{eq:133}
  \sum_{j\in\Z} |j|^\kappa \norm{\varphi_j}^2<\infty.
\end{align}
Then we have
 \begin{equation}\label{eq:1}
  \sup_{t\ge 0} \norm{|\bx|^{\kappa/2}\ \e^{-\ri t H} \proj_I(H)\psi}^2 <\infty.
 \end{equation}
\end{theorem}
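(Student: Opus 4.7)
The natural strategy is to reduce the statement to the half-line fibers $h_j$ and to use the discreteness of $\sigma(h_j)$ guaranteed under Hypothesis \ref{A1} by the result of \cite{MS2014} recalled before \eqref{eq:9l}. Since $\mathcal{U}$ is unitary and \eqref{dir sum} holds, while multiplication by $|\bx|$ acts in each fiber as multiplication by the radial variable $r$,
\begin{equation*}
\||\bx|^{\kappa/2}\,\e^{-\ri t H}\proj_I(H)\psi\|^2 \;=\; \sum_{j\in\Z}\|r^{\kappa/2}\,\e^{-\ri t h_j}\proj_I(h_j)\varphi_j\|^2.
\end{equation*}
In view of \eqref{eq:133}, it therefore suffices to establish a fiberwise bound of the form
\begin{equation*}
\sup_{t\ge 0}\|r^{\kappa/2}\,\e^{-\ri t h_j}\proj_I(h_j)\varphi_j\|^2 \;\le\; C\,(1+|j|)^\kappa\,\|\varphi_j\|^2,\qquad j\in\Z,
\end{equation*}
for some constant $C=C(I,\kappa)$; summing in $j$ and using \eqref{eq:133} then yields \eqref{eq:1}.

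For each fixed $j$, discreteness of $\sigma(h_j)$ together with boundedness of $I$ implies that $\proj_I(h_j)$ has finite rank. Let $\{E_{j,n}\}_{n\in\mathcal{N}_j}\subset I$ be the corresponding eigenvalues and $\{u_{j,n}\}$ an orthonormal family of eigenfunctions. Expanding $\proj_I(h_j)\varphi_j=\sum_{n}\langle u_{j,n},\varphi_j\rangle\,u_{j,n}$ and applying the triangle and Cauchy--Schwarz inequalities gives
\begin{equation*}
\sup_{t\ge 0}\|r^{\kappa/2}\,\e^{-\ri t h_j}\proj_I(h_j)\varphi_j\|^2 \;\le\; |\mathcal{N}_j|\,\max_{n\in\mathcal{N}_j}\|r^{\kappa/2}u_{j,n}\|^2\,\|\varphi_j\|^2,
\end{equation*}
so that the task reduces to an eigenvalue count $|\mathcal{N}_j|\lesssim (1+|j|)^{\alpha}$ together with a weighted eigenfunction bound $\|r^{\kappa/2}u_{j,n}\|^2\lesssim (1+|j|)^{\kappa-\alpha}$ for some $\alpha\ge 0$, uniformly in $n$ and in $E_{j,n}\in I$. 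The counting bound is the softer one: it can be obtained from a min--max or Birman--Schwinger comparison between $h_j^2$ and a reference Schr\"odinger operator on $L^2(\R^+)$ whose potential dominates the effective potential of $h_j^2$.

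The main obstacle is the weighted eigenfunction estimate. The idea is to localize $u_{j,n}$ to the classically allowed region of $h_j$ at energy $E\in I$ and to establish Agmon/Combes--Thomas type decay outside it. Squaring the eigenvalue equation $h_j u_{j,n}=E_{j,n}\,u_{j,n}$ produces a Schr\"odinger-type equation whose effective potential is, to leading order, $W_{j,E}(r)=(A(r)-m_j/r)^2-(V(r)-E)^2$; the first-order terms generated by the squaring are controlled by $|A'|$ and are subleading by \eqref{con2}. Hypothesis \eqref{con1} supplies a constant $c<1$ with $|V(r)-E|\le c|A(r)|+O(1)$ for $r$ large, uniformly in $E\in I$, so
\begin{equation*}
W_{j,E}(r)\;\ge\;(1-c)^2\,A(r)^2\,-\,O(1),\qquad r\ge R_0,
\end{equation*}
with $R_0$ independent of $j$. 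Since $|A|\to\infty$ by \eqref{con0}, this yields super-exponential decay $|u_{j,n}(r)|^2\lesssim \exp\bigl(-\int_{R_0}^r (1-c)\,|A(s)|\,ds\bigr)$ on $[R_0,\infty)$. Symmetrically, for large $|j|$ the centrifugal term $m_j/r$ dominates $A(r)$ on an inner region $[0,r_-^{(j)}]$, with the inner turning point $r_-^{(j)}$ determined implicitly by $r_-^{(j)}\,A(r_-^{(j)})\sim |m_j|$ so that $r_-^{(j)}\lesssim 1+|j|$; an analogous Agmon argument gives decay of $u_{j,n}$ on $[0,r_-^{(j)}]$. Hence the essential mass of $u_{j,n}$ lives in an annulus of outer radius $\lesssim 1+|j|$, which yields $\|r^{\kappa/2}u_{j,n}\|^2\lesssim (1+|j|)^\kappa$. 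Inserting this into the fiberwise bound and summing over $j\in\Z$ via \eqref{eq:133} completes the argument.
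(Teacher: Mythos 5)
Your reduction to the fibers and the general shape of the argument (eigenvalue counting for $h_j$ plus weighted/Agmon bounds on its eigenfunctions) match the paper, but the quantitative accounting does not close, and one key estimate as stated is false.

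First, the Agmon input. You claim $W_{j,E}(r)\ge (1-c)^2A(r)^2-O(1)$ for all $r\ge R_0$ with $R_0$ independent of $j$. This ignores the cross term: $(A-m_j/r)^2$ vanishes at the turning point $r^*_j$ where $A(r^*_j)=m_j/r^*_j$, i.e.\ at $r^*_j\sim r_j$ with $r_j|A(r_j)|\sim|m_j|\to\infty$, and there $W_{j,E}=-(V-E)^2\le 0$. So exponential decay can only be established for $r\gtrsim r_j$, a radius growing with $|j|$ — this is exactly why the paper's Lemma~\ref{ex-decay} carries the cutoff $\tf_j$ supported in $\{r\ge r_j\}$ and a constant $\e^{\gamma\ro(2r_j)}$ in front. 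Your later ``annulus of outer radius $\lesssim 1+|j|$'' remark implicitly concedes this, but it contradicts the displayed bound with $j$-independent $R_0$.

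Second, and more seriously, the bookkeeping. Your scheme needs $|\mathcal N_j|\cdot\max_n\|r^{\kappa/2}u_{j,n}\|^2\lesssim(1+|j|)^{\kappa}$ so that \eqref{eq:133} suffices after summation. But the eigenfunction mass genuinely sits at radius $\sim r_j\sim|m_j|$, so $\max_n\|r^{\kappa/2}u_{j,n}\|^2\sim|m_j|^{\kappa}$ with no room to spare, while any correct count gives $|\mathcal N_j|\gtrsim|m_j|$ (the paper's Bargmann bound is $C_I|m_j|\ln|m_j|$, and the classically allowed annulus grows with $|m_j|$, so $\alpha=0$ is not attainable). The product is $\sim|m_j|^{\kappa+1}\ln|m_j|\,\|\vp_j\|^2$, which \eqref{eq:133} does not control. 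The paper escapes this in two ways you are missing: (i) on the inner region $\{r\lesssim r_j\}$ it never expands in eigenfunctions — it simply bounds $r^{\kappa}\le(6r_j)^{\kappa}\lesssim|m_j|^{\kappa}$ and uses $\|\e^{-ith_j}\proj_I(h_j)\vp_j\|\le\|\vp_j\|$, so no factor $N_j$ appears there; (ii) on the outer region $\{r\gtrsim 3r_j\}$, where the expansion and the count are used, the Agmon weight $\e^{\gamma\ro}$ with $\ro(3r_j)-\ro(2r_j)\ge|m_j|\ln(3/2)/\delta_0$ produces an extra factor $\e^{-c\gamma|m_j|}$ that beats both $|m_j|^{\kappa}$ and $|m_j|\ln|m_j|$. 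Without such an exponentially small gain (or an equivalent device), your fiberwise bound $C(1+|j|)^{\kappa}\|\vp_j\|^2$ is out of reach and the sum over $j$ diverges.
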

We note that the condition \eqref{eq:133} is related to regularity of
the initial state $\psi$ in the angular variable.  Indeed, let $r^{-1/2}
\tilde\psi$ with $\tilde\psi\in L^2(\R^+\times[0,2\pi), \C^2)$ be
equal to $\psi\in L^2(\R^2,\C^2)$ expressed in polar
coordinates. Then, \eqref{eq:133} follows if $\tilde\psi\in
H^{\kappa/2}([0,2\pi),L^2(\R^+))\oplus
H^{\kappa/2}([0,2\pi),L^2(\R^+))$. Here, for $\kappa>0$
  \begin{equation*} \style
   H^{\kappa/2}\left([0,2\pi),L^2(\R^+)\right):=\{u\in
   L^2([0,2\pi)\times\R^+),
   \sum_{\ell\in\Z}(1+|\ell|)^{\kappa}\norm{\hat
  u_\ell}^2_{L^2(\R^+)} <\infty\},
  \end{equation*}
  is the fractional Sobolev space on the torus \cite{Gr} and $\hat u_\ell$
  is the $\ell-\mathrm{th}$ Fourier coefficient of $u$ with respect to
  the variable $\theta$. To be more explicit, note that
  $\style\F\tilde\psi =\mathop\oplus_{j} \varphi_j$ where, for $g\in
  L^2([0,2\pi),\C^2)$, \begin{equation*}
  (\F g)(j):=\frac{1}{\sqrt{2\pi}}\int_0^{2\pi} M_\theta\
  \e^{-im_j\theta}g(\theta)\dd\theta,
  \quad\mbox{and}\quad M_\theta=
  \begin{pmatrix}\e^{i\theta/2}&0\\0&i\e^{-i\theta/2}\end{pmatrix}.
\end{equation*}
We notice that
\begin{equation*}
  (\F g)(j)=\begin{pmatrix}\hat g_1(j)\\ i\hat g_2(j+1) \end{pmatrix}.
\end{equation*}
The angular momentum operator $J:=-i\partial_\theta+\sigma_3/2$ satisfies $\F
J\F^*=m_j$. Assuming for simplicity that $\kappa=2n$ with $n\in \N$, we have
\begin{align*}
  \big(\sum_{j\in\Z}|m_j|^{2n}\norm{\varphi_j}^2\big)^{1/2}&
  =\norm{J^{n} \tilde\psi}
  =\norm{(-i\partial_\theta+\frac{\sigma_3}{2})^{n} \tilde\psi}
  \\&\quad\le\sum_{k=0}^{n} {n \choose k}\norm{(-i\partial_\theta)^k\tilde\psi}
  \le2^n\norm{\tilde\psi}_{H^n([0,2\pi),L^2(\R^+))^2}.
\end{align*}
We can also derive a sufficient condition on $\psi$ for \eqref{eq:133} to hold. To avoid unnecessary complications, we stick to the case of even values of $\kappa$. Since $\partial_\theta = -x_2 \partial x_1 + x_1\partial x_2$,
we have
\begin{equation*}
   \norm{(\partial_\theta)^k \tilde\psi}^2
 = \norm{(-x_2 \partial_{x_1} + x_1\partial_{x_2})^k\psi}^2
\leq C_k\style\sum_{|\alpha|\leq k}\norm{(1 + |\bx|^2)^\frac{k}{2} \partial^\alpha_\bx \psi}^2,
\end{equation*}
where for the multi-index $\alpha=(\alpha_1, \alpha_2)$ we used the
notation $\partial^\alpha_\bx = \partial_{x_1}^{\alpha_1} \partial_{x_2}^{\alpha_2}$.
Hence, in Cartesian coordinates, condition \eqref{eq:133} holds if $\psi$ belongs to the weighted Sobolev space $H^{\kappa/2}_{\kappa/2}(\R^2,\C^2) := \{ g \in L^2(\R^2,\C^2)\ | \
  (1+|\bx|^2)^{\frac{\kappa}{4}} g(\bx) \in H^{\kappa/2}(\R^2,\C^2) \}$.
\section{Proof of the main result}

The strategy of the proof is as follows. We first rewrite the moment
in \eqref{eq:1} using the decomposition of $H$ as a direct sum of
$h_j$'s and the representation $\psi\simeq\mathop\oplus
\varphi_j$ (see \eqref{eq:2}). The main idea consists in taking
advantage of the discrete spectrum of the operator $h_j$ and the
exponential decay of its eigenfunctions to compensate the growth of the
moment. However, the decay is not uniform in $j$, it turns out (see
\eqref{eq:12} in Lemma~\ref{ex-decay}) that the exponential decay
estimate can only be derived outside an interval which grows with
$j$.  This is due to the fact that the term $A-m_j/r$ in $h_j$ can be
controlled by a fraction of $A$ only outside a ball of radius $r_j$
with $r_j |A(r_j)| \sim |m_j|$. Hence, we split the space
$\R^+\times\Z$ into two regions through the function $f$, given in
Definition~\ref{theta}. In the first region, where the values of $r
|A(r)|$ are sufficiently large compared to those of $j$, we
control the growth of the moment in $r$ with the exponential decay of
the eigenfunctions obtained in Section~\ref{use2} by an Agmon-type argument. We
also need to control the number of these eigenfunctions in the
spectral region we consider. This latter bound is derived in Section~\ref{use1} by using arguments due to
Bargmann. In the second region where the values of $j$ are large
compared to $r |A(r)|$, the regularity assumption \eqref{eq:133} in
the angular variable for our initial state yields a decay in the
variable $j$ which is used to control the growth of $r^\kappa
\lesssim j^\kappa$.

We now turn to the detailed proof of Theorem~\ref{dyn loc}.

Since $\mathcal{U}\psi=\oplus_{j}\varphi_j$ we have
\begin{equation}\label{eq:2}
 \norm{|\bx|^{\kappa/2}\ \e^{-\ri t H} \proj_I(H)\psi}^2 =\sum_{j\in\Z} \norm{ r^{\kappa/2} \re^{-\ri t h_j} \proj_I(h_j) \vp_j}^2.
\end{equation}
We  note that  it is enough to consider  only the sum for $|j| > J_0$, for
sufficiently large $J_0>1$. Indeed, let $N_j$ be the number of
eigenvalues of $h_j$ in the interval $I$. Let  $\ef_j^{(k)}$, $k=1,2,\dots,
 N_j$ denote the corresponding  eigenfunctions of
$h_j$. Then, expanding $ \proj_I(h_j) \vp_j$ in terms of the $\ef_j^{(k)}$, we have
\begin{align*}
 \sum_{|j|\le J_0} \norm{ r^{\kappa/2} \re^{-\ri t h_j} \proj_I(h_j) \vp_j}^2
  \le \sum_{|j|\le J_0} \sum_{k=1}^{N_j} \norm{r^{\kappa/2}\ef_j^{(k)}}^2<\infty.
\end{align*}
where the last inequality  holds in view of Remark \ref{ladyd} below.
To estimate the right hand side of \eqref{eq:2} we split the
integration $\style\sum_j \int_0^\infty$ into two regions characterized by
\begin{equation}\label{eq:def-rj}
 r_j\equiv r_j(\delta_0) := \sup\{ r \in \R^+ \ |\ |m_j| \geq \delta_0 r |A(r)|\},
\end{equation}
for some $\delta_0\in(0,1)$ which is chosen in the proof of
Lemma \ref{ex-decay} below.
\begin{remark}\label{rjr}
  Note that since $A$ is continuous and $A(r)\to\infty$ as
  $r\to\infty$ we have that $r_j<\infty$ and that $r_j\to\infty$ as $|j|\to
  \infty$.  Moreover, the supremum in \eqref{eq:def-rj} is attained
  and hence
\begin{align}
  \label{eq:3}
  |m_j|=\delta_0 r_j |A(r_j)|.
\end{align}
Moreover, we note that
\begin{align}
  \label{eq:31}
  |m_j|\le\delta_0 r|A(r)|,\quad r\ge r_j.
\end{align}
\end{remark}
\begin{definition}\label{theta}
Let $\theta\in C^\infty(\R^+,[0,1])$ with $\theta(r)=0$ for $r<1$ and
$\theta(r)=1$ for $r>2$. We set $f(r,j)= f_j(r) :=\theta(r/3 r_j) $
and $f_j^c:=1-f_j$.
\end{definition}

\begin{proof}[Proof of Theorem~\ref{dyn loc}.]
We have
\begin{equation}\label{eq:split}
\begin{split}
 & \sum_{j\in\Z} \norm{ r^{\kappa/2} \re^{-\ri t h_j} \proj_I(h_j) \vp_j}^2 \\
 &  \leq 2 \sum_{j\in \Z} \left( \norm{f_j r^{\kappa/2} \re^{-\ri t h_j} \proj_I(h_j) \vp_j}^2
  + \norm{ f_j^c r^{\kappa/2} \re^{-\ri t h_j} \proj_I(h_j) \vp_j}^2  \right).
\end{split}
\end{equation}
We first estimate the second term of the right hand side of
\eqref{eq:split} using the regularity in the angular variable for the initial state as given by \eqref{eq:133}. In what follows we pick $J_0>1$ so large that $|A(r)|>1$ for
all $r>r_j$ and $|j|>J_0$. In particular, we have that
\begin{align}
  \label{eq:4}
  r_j<|m_j|/\delta_0.
\end{align}
Using this and the support properties of $f_j^c$, we have
\begin{equation}\label{eq:large-mj}
\begin{split}
 &  \sum_{|j|> J_0} \norm{ f_j^c r^{\kappa/2} \re^{-\ri t h_j} \proj_I(h_j) \vp_j}^2 \\
 & \leq \sum_{|j|> J_0} \norm{ f_j^c (6r_j)^{\kappa/2}  \re^{-\ri t h_j} \proj_I(h_j) \vp_j}^2 \\
 & \leq \sum_{|j|> J_0} (6|m_j|/\delta_0)^{\kappa}
\norm{ \re^{-i t h_j}\proj_I(h_j)}^2 \, \norm{\vp_j}^2  <\infty\, ,\\
\end{split}
\end{equation}
where we used \eqref{eq:133} in the last bound.

We now estimate the first term in the right hand side of
\eqref{eq:split}. For  $|j|>J_0$ we compute
\begin{align*}
  \norm{f_jr^{\kappa/2}\proj_I(h_j)\e^{i t h_j} \varphi_j}^2&\le
  \sup_{\|\phi\|=1}
\left(\sum_{k=1}^{N_j} \norm{f_jr^{\kappa/2} \scal{\ef_j^{(k)},\phi}
  \ef_j^{(k)}}\right)^2\\
&\le
   \sum_{k=1}^{N_j} \norm{f_jr^{\kappa/2}
  \ef_j^{(k)}}^2  \,\sup_{\|\phi\|=1} \sum_{l=1}^{N_j}
  |\scal{\ef_j^{(l)},\phi}|^2\\
&= \sum_{k=1}^{N_j} \norm{f_jr^{\kappa/2}
  \ef_j^{(k)}}^2.
\end{align*}
Consider the function $\tf_j$ defined at the beginning of  Section~\ref{use2} below. Since
$f_j=f_j\tf_j$ we have, choosing also $J_0>J_2$ (see Lemma \ref{ex-decay})
\begin{align*}
  \norm{f_jr^{\kappa/2}
  \ef_j^{(k)}}\le \norm{f_jr^{\kappa/2}\e^{-\gamma \ro}}
  \,\norm{\e^{\gamma \ro}\tf_j\ef_j^{(k)}}\le
  \norm{f_jr^{\kappa/2}\e^{-\gamma \ro}} \frac{C}{r_j} \e^{\gamma \ro(2r_j)},
\end{align*}
where $\rho(r)=\int_0^r|A(s)|\dd s$ is the exponential weight
defined in  Lemma~\ref{ex-decay}.
Since $r^{\kappa/2}\e^{-\gamma \ro}$ decays monotonically at infinity,
we may choose $J_0>1$ to be so large that the supremum of
$f_jr^{\kappa/2}\e^{-\gamma \ro}$ is bounded above by
$(3r_j)^{\kappa/2}\e^{-\gamma \ro(3r_j)}$. Hence
  \begin{align*}
    \norm{f_jr^{\kappa/2}
  \ef_j^{(k)}}\le \frac{C}{r_j}  (3r_j)^{\kappa/2}\e^{-\gamma (\ro(3r_j)-\ro(2r_j))}.
  \end{align*}
Note that due to \eqref{eq:31}
\begin{align*}
  \ro(3r_j)-\ro(2r_j)=\int_{2r_j}^{3r_j} r|A(r)| \frac{\dd r}{r}\ge
  \frac{|m_j|}{\delta_0} \ln(\tfrac{3}{2} )>\frac{|m_j|}{3\delta_0},
\end{align*}
for $|j|$ large enough. Thus, using that $r_j\le |m_j|$ and
Lemma \ref{bargmann2}, we get for $|j|>J_2$
\begin{align*}
  \norm{f_jr^{\kappa/2}\proj_I(h_j)\e^{i t h_j} \varphi_j}^2&\le C^2 \sum_{k=1}^{N_j}
  \left(\e^{-\gamma \frac{|m_j|}{3\delta_{0}}}
    (3|m_j|)^{\kappa/2}\right)^2\\
 &\le 3^\kappa{C^2 C_{I}} |m_j|^{\kappa+1}   \ln\!|m_j|\,
\e^{-\gamma \frac{2|m_j|}{3\delta_0}}.
\end{align*}
Since the last bound is summable for $|j|=|m_j-1/2|>J_0$ we get the expected result.
\end{proof}
\section{Estimate on the number of eigenvalues of $h_j$}\label{use1}
Let $T$ be a self-adjoint operator on a Hilbert space $\mathcal H$
with purely discrete spectrum.
We set for an interval $I \subset \R$
\begin{equation*}
N_I(T) := \dim P_I(T) \mathcal{H}  ,
\end{equation*}
i.e. $N_I(T)$ denotes the number of eigenvalues of $T$ in $I$
counted with multiplicity.
\begin{lemma}[Bound on the number of eigenvalues for $h_j$]\label{bargmann2}
There is a $J_1>1$ such that for any $E>0$ there is a constant $C_E>0$
so that
\begin{equation}
 N_{[-E,\, E]}(h_j) \le C_E |m_j| \ln\!|m_j| , \quad
\mathrm{for} \ |j| \ge J_1.
\end{equation}
\end{lemma}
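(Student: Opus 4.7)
The plan is to reduce the eigenvalue count for $h_j$ on $[-E,E]$ to an eigenvalue estimate for a Schr\"odinger-type operator via squaring, and then apply a Bargmann-type bound on the half-line. Since $h_j$ has purely discrete spectrum by Hypothesis~\ref{A1} and $\sigma(h_j^2)$ is the set of squared eigenvalues of $h_j$ with multiplicities respected, one has $N_{[-E,E]}(h_j) \le 2\, N_{[0,E^2]}(h_j^2)$. Writing $D_0 := -i\sigma_2 \partial_r + \sigma_1(A - m_j/r)$, the Pauli-matrix identities $\{\sigma_1,\sigma_2\}=0$ and $\sigma_1\sigma_2 = i\sigma_3$ yield
\[
D_0^2 = -\partial_r^2 + \bigl(A - \tfrac{m_j}{r}\bigr)^2 - \sigma_3\bigl(A' + \tfrac{m_j}{r^2}\bigr).
\]
Expanding $h_j^2 = D_0^2 + 2Vh_j - 2V^2 - i\sigma_2 V' + V^2$ and restricting to the spectral subspace $\{|h_j|\le E\}$, the $V$-dependent terms are controlled in terms of $E$ and $|V|$; using $|V| \le (1-\delta)|A|$ from Hypothesis~\ref{A1}, they can be absorbed into the dominant term $(A-m_j/r)^2$ modulo an additive constant depending on $E$. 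The task reduces to bounding the number of eigenvalues below a threshold $\tilde E^2$ of a matrix Schr\"odinger operator $\mathcal H_j := -\partial_r^2 + U_j$ on $L^2(\R^+,\C^2)$, whose effective potential satisfies $U_j(r) \ge \tfrac{1}{2}(A(r) - m_j/r)^2 - C_E$.

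Next I would invoke Bargmann's bound: the number of eigenvalues below zero of $-\partial_r^2 + W$ on $L^2(\R^+)$ with Dirichlet condition at the origin is at most $\int_0^\infty r\,|W_-(r)|\,dr$. Applied componentwise to $\mathcal H_j - \tilde E^2$, this gives
\[
N_{[0,\tilde E^2]}(\mathcal H_j) \le 2\int_0^\infty r\bigl(\tilde E^2 - U_j(r)\bigr)_+\, dr.
\]
The integrand is supported in the classically allowed set $\mathcal A_j := \{r>0 : (A(r) - m_j/r)^2 \le 2(\tilde E^2 + C_E)\}$, which by \eqref{con0}--\eqref{con2} is concentrated near the point $r_j^\star$ where $r_j^\star|A(r_j^\star)| = |m_j|$; in particular $r_j^\star$ is comparable to the radius $r_j$ defined in \eqref{eq:def-rj}, and $r_j^\star \le C|m_j|$.

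The final step is to estimate the integral. Using \eqref{con2}, the derivative $(rA)'(r_j^\star) = A(r_j^\star) + r_j^\star A'(r_j^\star)$ is comparable to $A(r_j^\star)$, so $\mathcal A_j$ has width at most $C(\tilde E + C_E^{1/2})/|A(r_j^\star)|$. The crude bound $\int r(\tilde E^2 - U_j)_+\, dr \le \tilde E^2\, r_j^\star\, |\mathcal A_j| \lesssim |m_j|/A(r_j^\star)^2$ already gives $O(|m_j|)$ in favourable cases. The main obstacle I anticipate is obtaining the uniform factor $|m_j|\ln|m_j|$ across the full admissible range of $A$: when $A$ grows slowly (e.g.\ logarithmically in $r$), the set $\mathcal A_j$ broadens and the integral can approach $|m_j|\ln|m_j|$. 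I expect to handle this either by a dyadic decomposition of $\mathcal A_j$ into $O(\ln|m_j|)$ annular shells and applying a crude eigenvalue count on each, or by a direct change of variables to $s = rA(r)$ in the Bargmann integral, which naturally produces a logarithmic factor. Once this integral is bounded by $C_E|m_j|\ln|m_j|$, the lemma follows for $|j| \ge J_1$ with $J_1$ chosen large enough that all error terms from the $V$-absorption and the squaring step are controlled by Hypothesis~\ref{A1}.
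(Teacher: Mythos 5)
Your overall strategy (square $h_j$, pass to a half-line Schr\"odinger operator, apply a Bargmann bound) is the paper's strategy, but the specific Bargmann step you choose cannot yield $C_E|m_j|\ln|m_j|$, and this step is where the lemma is actually won or lost. You fold the whole term $(A-\tfrac{m_j}{r})^2-\sigma_3(A'+\tfrac{m_j}{r^2})$ into an effective potential $U_j$ and then apply the $l=0$ Bargmann bound $N\le\int_0^\infty r|W_-(r)|\,\dd r$ componentwise. The paper instead keeps the centrifugal part in the comparison operator, using $\bigl(-\ri\sigma_2\partial_r-\sigma_1\tfrac{m_j}{r}\bigr)^2=-\partial_r^2+m_j(m_j-\sigma_3)/r^2$, and invokes the \emph{generalized} Bargmann bound $N_{(-\infty,0]}\bigl(-\partial_r^2+l(l+1)/r^2+W\bigr)\le\tfrac{1}{2l+1}\int_0^\infty r|W_-|\,\dd r$ with $l\sim|m_j|$. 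The prefactor $\tfrac{1}{2|m_j|\mp 1}$ supplies a factor $1/|m_j|$ that your version does not have; in the paper the integral itself is of order $m_j^2\ln|m_j|$, and it is precisely this prefactor that brings the count down to $|m_j|\ln|m_j|$.

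Without that factor your integral is genuinely too large, not merely awkward to estimate. Your width estimate for $\mathcal A_j$ is also incorrect: the relevant function is $A-\tfrac{m_j}{r}$, whose derivative near $r_j^\star$ is $A'+\tfrac{m_j}{r^2}\approx A(r_j^\star)/r_j^\star$, not $\approx A(r_j^\star)$ (equivalently, $\{|A-\tfrac{m_j}{r}|\le c\}=\{|rA-m_j|\le cr\}$, and the extra factor $r\approx r_j^\star$ on the right widens the set). Hence $|\mathcal A_j|\sim C r_j^\star/|A(r_j^\star)|$ and $\int_{\mathcal A_j}r\,\dd r\sim C(r_j^\star)^2/|A(r_j^\star)|=Cm_j^2/|A(r_j^\star)|^3$. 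Taking $A(r)=\ln r$, which satisfies Hypothesis \ref{A1}, this is of order $m_j^2/(\ln|m_j|)^3\gg|m_j|\ln|m_j|$, so no dyadic decomposition or change of variables applied to this integral can rescue the claim: the loss is in the inequality you are integrating, not in how you integrate it. A smaller point: your expansion of $h_j^2$ produces the term $-\ri\sigma_2V'$, and $V'$ is not controlled by Hypothesis \ref{A1}; it is anti-self-adjoint and therefore drops out of the real quadratic form, but this needs to be said, whereas the paper avoids differentiating $V$ altogether via $(a+b)^2\ge(1-\epsilon)a^2+(1-\epsilon^{-1})b^2$.
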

\begin{proof}
We first note that
\begin{align*}
h_j =  - \ri\sigma_2 \partial_r + \sigma_1 \left(A(r) - \tfrac{m_j}{r}\right) + V(r)
\end{align*}
is essentially
self-adjoint on $C_0^\infty(\R^+,\C^2)$. In addition, we
obtain by the spectral theorem
\begin{equation*}
 N_{[-E,\, E]}(h_j) =  N_{[0,\, E^2]}(h_j^2).
\end{equation*}
In the sense of quadratic forms on $C_0^\infty(\R^+,\C^2)$
we obtain for any $\epsilon \in (0,1)$ the estimate
\begin{align*}
h_j^2 \ge &
(1-\epsilon) \big[  -\ri\sigma_2 \partial_r - \sigma_1\tfrac{m_j}{r}
+ \sigma_1 A(r) \big]^2 + \big(1- \tfrac{1}{\epsilon}\big) V^2(r)\\
= &
(1-\epsilon) \big[
\big(-\ri\sigma_2 \partial_r - \sigma_1\tfrac{m_j}{r} \big)^2
+A^2(r) -\tfrac{1}{\epsilon}V^2(r)
-\sigma_3 A'(r) - \tfrac{2m_j}{r}A(r) \big].
\end{align*}
Let
$\delta:=(1-\epsilon)/2$. Due to \eqref{con1} and \eqref{con2} we have
that $\delta A^2-\sigma_3A'$ and $\epsilon A^2-V^2/\epsilon$ are
positive outside a large ball $B$ for $\epsilon\in(0,1)$ sufficiently
close to $1$. Let
$C:=\|V^2/\epsilon+\sigma_3A'\|_{L^\infty(B)} $. Then we find
\begin{align*}
h_j^2 \ge (1-\epsilon) \Big[
\big(-\ri\sigma_2 \partial_r - \sigma_1\tfrac{m_j}{r} \big)^2
+\delta A^2(r) -C  - \tfrac{2|m_j|}{r}|A(r)| \Big].
\end{align*}
We write
\begin{align*}
  h_j^2 - E^2\ge
 (1-\epsilon) \Big[
\big(-\ri\sigma_2 \partial_r - \sigma_1\tfrac{m_j}{r} \big)^2
+ W_j \Big],
\end{align*}
where
\begin{align*}
  W_j(r)=\delta A^2(r) -C  -(1-\epsilon)^{-1} E^2- \tfrac{2|m_j|}{r}|A(r)|.
\end{align*}
Let $R_j:=r_j(\delta/4)$, where $r_j(\cdot)$ is defined in
\eqref{eq:def-rj}. This yields,
$|m_j|\le \frac\delta4 r |A(r)| $ for all $r>R_j$.
Moreover, we may pick $J_1$ so large that (recall that $R_j\to\infty$
as $|j|\to \infty$)
\begin{align*}
  \tfrac{\delta}{2} A^2(r) -C  -(1-\epsilon)^{-1} E^2>0, \quad
  \mbox{for all} \quad r>R_j,\, |j|>J_1.
\end{align*}
Thus , $W_j \id_{(R_j,\infty)}\ge 0$ and
\begin{align}
  \label{eq:5}
   h_j^2 - E^2\ge (1-\epsilon) \Big[
\big(-\ri\sigma_2 \partial_r - \sigma_1\tfrac{m_j}{r} \big)^2
+ W_j \id_{(0,R_j]}\Big].
\end{align}
Define
\begin{align}
  \label{eq:6}
  D_j:=\{r\in(0,R_j)\ | \ |m_j|\ge \tfrac{\delta}{4} r|A(r)|\}.
\end{align}
Note that if $r\in (0,R_j)\cap (\R^+\setminus D_j)$ then
$\frac{\delta}{2}A(r)^2-\frac{2|m_j|}{r} |A(r)|\ge 0$. Hence we have
\begin{align}
  \label{eq:7}
    h_j^2 - E^2\ge  (1-\epsilon) \Big[
\big(-\ri\sigma_2 \partial_r - \sigma_1\tfrac{m_j}{r} \big)^2
+ W_j^< \Big],
\end{align}
where
\begin{align}
  \label{eq:8}
   W_j^<(r):= (\delta A^2(r) - \tfrac{2|m_j|}{r}|A(r)|)\id_{D_j} -
(C  +(1-\epsilon)^{-1} E^2)\id_{(0,R_j]}.
\end{align}
An application of the min-max principle leads to
\begin{align*}
N_{[0, E^2]}\big(h_j^2\big) &=
N_{(-\infty, 0]} \big( h_j^2 -E^2\big) \\ & \le
N_{(-\infty, 0]} \big(
\big(-\ri\sigma_2 \partial_r - \sigma_1\tfrac{m_j}{r} \big)^2
+W_j^<\big).
\end{align*}
A direct computation shows that
\begin{align*}
\big(-\ri\sigma_2 \partial_r - \sigma_1\tfrac{m_j}{r} \big)^2
&=-\partial_r^2 + \tfrac{1}{r^2}
m_j (m_j-\sigma_3)\\
&=
\begin{pmatrix}
-\partial_r^2 + \tfrac{1}{r^2}\,
m_j (m_j-1) & 0\\
0 &-\partial_r^2 + \tfrac{1}{r^2}\,
m_j(m_j+1)
\end{pmatrix}.
\end{align*}
Note that $m_j(m_j \pm 1) > 0$ for $|j| >J_1$. Using the
generalized Bargmann estimate \cite{B1952} (see also \cite[Theorem XIII.9]{RS4}) we obtain for $|m_j| >1/2$
\begin{align*}
N_{(-\infty, 0]} \big(
-\partial_r^2 + \tfrac{1}{r^2}\, m_j (m_j&-1)
+ W_j^< \big) \\ &\le
\begin{cases}
\tfrac{1}{2m_j-1} \int_0^\infty r| W_j^< (r)| \dd r
& \mathrm{if} \ m_j > \tfrac{1}{2} \\[0.1cm]
\tfrac{1}{2|m_j|+1} \int_0^\infty r| W_j^< (r)| \dd r
& \mathrm{if} \ m_j < - \tfrac{1}{2}
\end{cases}
\end{align*}
and
\begin{align*}
N_{(-\infty, 0]} \big(
-\partial_r^2 + \tfrac{1}{r^2}\, m_j (m_j&+1)
+  W_j^< \big) \\ &\le
\begin{cases}
\tfrac{1}{2m_j+1} \int_0^\infty r| W_j^< (r)| \dd r
& \mathrm{if} \ m_j > \tfrac{1}{2} \\[0.1cm]
\tfrac{1}{2|m_j|-1} \int_0^\infty r| W_j^< (r)| \dd r
& \mathrm{if} \ m_j < - \tfrac{1}{2}
\end{cases},
\end{align*}
and therefore
\begin{align}\label{isabel1}
N_{(-\infty, 0]} \big(
\big(-\ri\sigma_2 \partial_r - \sigma_1\tfrac{m_j}{r} \big)^2
+ W_j^< \big)  \le
\frac{1}{|m_j|- 1/2} \int_0^\infty r| W_j^<(r)| \dd r.
\end{align}
Now we estimate using the definition of $D_j$
\begin{equation}
\label{isabel2}
\begin{split}
\int_0^\infty r| W_j^<(r)|  \dd r  &\le \frac{(C
  +(1-\epsilon)^{-1} E^2)R_j^2}{2} +
\int_{D_j}
r\big|\delta A^2(r) -
\tfrac{2|m_j|}{r}|A(r)| \big| \dd r\\
  &\le \frac{(C
  +(1-\epsilon)^{-1} E^2)R_j^2}{2} +
\int_{D_j} 2|m_j||A(r)| \dd r.
\end{split}
\end{equation}
Furthermore,
\begin{equation}
\label{isabel3}
\begin{split}
\int_{D_j} 2|m_j||A(r)| \dd r&=  \int_{D_j\cap(0,1)}
  2|m_j||A(r)| \dd r
+  \int_{D_j\cap(1,\infty)} 2|m_j||A(r)| \dd r\\
&\le 2|m_j|\norm{A}_{L^\infty[0,1]}+\tfrac{8m_j^2}{\delta} \ln(R_j).
\end{split}
\end{equation}
Note that in view of Remark \ref{rjr}, and the fact that $|A(r)|$
grows at infinity, we have for sufficiently large $J_1$
\begin{align*}
  R_j=\frac{4|m_j|}{\delta |A(R_j)|}\le \frac{4|m_j|}{\delta},\quad |j|>J_1.
\end{align*}
This together with \eqref{isabel1},  \eqref{isabel2}, and
\eqref{isabel3} yields the result.
\end{proof}
\section{Exponential decay of eigenfunctions of $h_j$}\label{use2}
Let  $r_j\equiv r_j(\delta_0)$ be   given  as in \eqref{eq:def-rj}. We
note that $\delta_0\in(0,1)$ will be fixed throughout the proof of the
next lemma. For the function $\theta$ as defined in Definition~\ref{theta}, we set $\tf(r,j)= \tf_j(r) :=\theta(r/ r_j)$.
\begin{lemma}\label{ex-decay}
There exist  $\gamma>0$ and  $J_2>1$ such that for all $|j|>J_2$ the following holds:
Let $\ef_j\in L^2(\R^+,\C^2)$ be a normalized eigenfunction of $h_j$ with energy $E\in I$. Then,
for some $C>0$ (independent of $j$),
\begin{align}
  \label{eq:12}
  \norm{A\e^{\gamma\ro}\tf_j\ef_j}\le \frac{C}{r_j} \e^{\gamma\ro(2r_j)}
\end{align}
where  for $r\geq 0$, $\style\ro(r):=\int_0^r |A(s)|\dd s$.
\end{lemma}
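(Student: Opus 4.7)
The plan is an Agmon-type exponential decay argument that uses, as its workhorse inequality, the operator bound on $h_j^2$ derived in the proof of Lemma~\ref{bargmann2}. Set $\chi := \e^{\gamma\ro}\tf_j$ with $\gamma>0$ to be chosen small. Since $\chi$ is scalar and $h_j\ef_j = E\ef_j$, we have
\begin{equation*}
(h_j - E)\chi\ef_j = [h_j,\chi]\ef_j = -\ri\sigma_2\,\chi'\ef_j,
\end{equation*}
so $\norm{(h_j-E)\chi\ef_j} = \norm{\chi'\ef_j}$. The scheme is to bound the right side above by $\gamma\norm{A\chi\ef_j}$ plus a boundary term, and the left side below by a constant multiple of $\norm{A\chi\ef_j}$; absorbing the $\gamma$-term then produces the claim.

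For the upper bound, use $\chi' = \gamma|A|\chi + \tf_j'\,\e^{\gamma\ro}$. The support of $\tf_j'$ is contained in $[r_j,2r_j]$ and $|\tf_j'|\le C/r_j$, so by monotonicity of $\ro$,
\begin{equation*}
\norm{\chi'\ef_j}^2 \le 2\gamma^2\norm{A\chi\ef_j}^2 + \frac{C^2}{r_j^2}\,\e^{2\gamma\ro(2r_j)}\norm{\ef_j}^2.
\end{equation*}

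For the lower bound, combine the elementary inequality $(h_j - E)^2 \ge \tfrac12 h_j^2 - E^2$ with the form estimate reproduced in the proof of Lemma~\ref{bargmann2},
\begin{equation*}
h_j^2 \ge (1-\epsilon)\Big[\big(-\ri\sigma_2\partial_r - \sigma_1\tfrac{m_j}{r}\big)^2 + \delta A^2 - C - \tfrac{2|m_j|}{r}|A|\Big],
\end{equation*}
valid as a quadratic form with $\delta = (1-\epsilon)/2$ for $\epsilon$ close to $1$, the constant $C$ absorbing the local contributions from $V$ and $A'$. Applied to $\chi\ef_j$, the first summand is non-negative, and on $\mathrm{supp}(\chi)\subset[r_j,\infty)$ inequality \eqref{eq:31} yields $2|m_j||A|/r \le 2\delta_0 A^2$. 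Fixing $\delta_0<\delta/2$ keeps the coefficient of $A^2$ strictly positive, and enlarging $J_2$ forces $|A|$ on $[r_j,\infty)$ to be large enough that the leftover constants $C$ and $E^2$ (uniform for $E\in I$) are absorbed by a fraction of the $A^2$ term. The outcome is $\norm{(h_j - E)\chi\ef_j}^2 \ge c\,\norm{A\chi\ef_j}^2$ for some $c>0$ independent of $j$.

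Combining the two estimates, one chooses $\gamma$ so small that $2\gamma^2<c$ and rearranges to
\begin{equation*}
\norm{A\,\e^{\gamma\ro}\tf_j\ef_j}^2 \le \frac{C^2}{(c-2\gamma^2)\,r_j^2}\,\e^{2\gamma\ro(2r_j)},
\end{equation*}
which is the claimed bound. The principal obstacle is the simultaneous coordination of the constants $\epsilon,\delta,\delta_0,\gamma$ so that a positive multiple of $A^2$ survives every step; a secondary, merely technical point is the a priori justification that $\chi\ef_j$ lies in the domain of $h_j$, which I would handle by inserting an auxiliary smooth cutoff $\eta(r/N)$ at infinity, running the argument uniformly in $N$, and sending $N\to\infty$ using monotone convergence together with $\norm{\ef_j}=1$.
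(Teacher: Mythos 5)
Your core scheme is sound and is a genuinely different (and arguably more economical) route than the paper's: you recycle the quadratic-form inequality from the proof of Lemma~\ref{bargmann2} to get the lower bound $\norm{(h_j-E)\chi\ef_j}^2\ge c\norm{A\chi\ef_j}^2$ directly from $h_j^2$, whereas the paper works with $k_j=h_j-V$ and the mixed quantity $Q_j=\Re\scal{k_j\e^{\gamma\ro_\epsilon}g_j,k_j\e^{-\gamma\ro_\epsilon}g_j}$, bounding it below by $((1-\delta_0)^2-\tilde\epsilon-\gamma^2)\norm{Ag_j}^2$ and above via the eigenvalue equation $k_j\ef_j=(E-V)\ef_j$, invoking \eqref{con1} only at the very end to absorb $(E-V)^2$ into a fraction of $A^2$. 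Your bookkeeping of $\epsilon$ close to $1$, $\delta=(1-\epsilon)/2$, $\delta_0<\delta/2$ (legitimate, since $\delta_0$ is indeed fixed in this lemma), and then $\gamma$ small is coherent, and the commutator/upper bound is correct.

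The genuine gap is in what you dismiss as ``merely technical'': the a priori integrability of $\e^{\gamma\ro}\tf_j\ef_j$. Your proposed fix --- multiplying by a spatial cutoff $\eta(r/N)$ and sending $N\to\infty$ --- does not run uniformly in $N$. The commutator $[h_j,\e^{\gamma\ro}\tf_j\eta(r/N)]$ produces, besides the terms you list, the piece $-\ri\sigma_2\,N^{-1}\eta'(r/N)\e^{\gamma\ro}\tf_j\ef_j$ supported in $[N,2N]$, contributing $N^{-2}\int_N^{2N}\e^{2\gamma\ro}|\ef_j|^2\dd r\le N^{-2}\e^{2\gamma\ro(2N)}\norm{\ef_j}^2$ to the right-hand side. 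Since $\ro$ grows superlinearly (as $|A|\to\infty$), this blows up as $N\to\infty$, and it cannot be absorbed into $\norm{A\chi_N\ef_j}^2$ where $\eta$ is near zero; controlling it would require exactly the decay of $\ef_j$ you are trying to prove. The correct regularization --- and the one the paper uses --- is to tame the \emph{weight} rather than truncate the function: replace $\ro$ by the bounded $\ro_\epsilon=\ro/(1+\epsilon\ro)$ (or by $\min(\ro,\ro(N))$), so that $\ro_\epsilon'\le|A|$ still holds, all estimates are uniform in the regularization parameter, no boundary term at infinity appears, and monotone convergence yields \eqref{eq:12}. With that substitution your argument closes.
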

\begin{remark}
\label{ladyd}
It is clear from the proof that the exponential decay of the
eigenfunctions of $h_j$ remains true for $|j|\le J_2$, however, in
this case we get a different constant in front of the exponential.
\end{remark}
\begin{remark}\label{domain}
  Throughout the proof of Lemma \ref{ex-decay} we use that $ h_j$ and
  $k_j:=h_j-V$ are essentially self-adjoint operators on
  $C_0^\infty(\R^+,\C^2)$ (see for instance \cite{MS2014B} and
  references therein). Moreover, we also use that
  $V$ is a perturbation with respect to the magnetic Dirac operator
  $k_j$ in the sense that   $\mathcal D(V)\supset C_0^\infty(\R^+,\C^2)$ and
  there exists $C$ such that
\begin{equation*}
\norm{V \vp}\le C( \norm{k_j \vp}+\norm{\vp}) \ \quad{for \ all} \ \vp\in C_0^\infty(\R^+,\C^2).
\end{equation*}
Indeed, let $\vp\in C_0^\infty(\R^+,\C^2)$ and $\chi_R $ be a smooth characteristic function of
a ball of radius $R>0$. We set $\chi_R^c=1-\chi_R$ and $V=V^<+V^>$ where
$V^<:=V \chi_R$ and $V^>:=V\chi_R^c$. We thus have
\begin{align*}
\norm{V\vp}
&\le\norm{V^< \vp}+\norm{V^>\vp}\le C_R\norm{\vp}+\norm{A\chi_R^c\vp},
\end{align*}
since $V^<$ is bounded and $V$ is dominated by $A$ at infinity according to assumption \eqref{con1}.
Moreover, for $R$ large enough, we use \eqref{con2} and the identity $k_j^2 = -\partial_r^2 + A_j^2 -\sigma_3 A'_j$ to write
\begin{align*}
 \tfrac{1}{2}\norm{A\chi_R^c\vp}
\le\norm{k_j\chi_R^c\vp}
&\le\norm{(\nabla\chi_R^c)\vp}+\norm{\chi_R^c k_j\vp}
\le\norm{\nabla\chi_R^c}\norm{\vp}+\norm{k_j\vp}.
\end{align*}
\end{remark}
\begin{proof}[Proof of Lemma \ref{ex-decay}]
In order to derive the Agmon-type estimates, we follow \cite{KS}.
We set
\begin{equation}\label{A_j} A_j:=(A-\frac{m_j}{r}),\end{equation}
and we notice that $|A_j|\ge (1-\delta_0)|A|$ on the support of $\tf_{j}$.
Let $\ef_j$ be a normalized eigenfunction of $h_j$ associated to an energy $E$.
We define
\begin{equation}\label{rho}
g_j:= \e^{\gamma\ro_\epsilon}\tf_j\ef_j,
\end{equation}
where $\gamma\in(0,1)$ and $\ro_\epsilon=\dfrac{\ro}{1+\epsilon\ro}$
such that $\style\ro(r)=\int_0^r |A(s)|\dd s$. Note that $\ro_\epsilon$
is bounded and differentiable.
Consider the operator
\begin{equation}\label{k_j}
k_j=-i\sigma_2\partial_r+\sigma_1 A_j=h_j-V,
\end{equation}
and we define
\begin{equation}\label{Q}
Q_j:=\Re\scal{k_j\e^{\gamma\ro_\epsilon} g_j,k_j\e^{-\gamma\ro_\epsilon}g_j}.
\end{equation}
The task is to obtain bounds for $Q_j$ that will allow us to bound $g$ uniformly in $\epsilon$.


{\bf Lower bound}.
Notice that
\begin{equation}
 \com{k_j,\e^{\gamma\ro_\epsilon}}=-i\sigma_2 \gamma\ro'_\epsilon\e^{\gamma\ro_\epsilon},
\end{equation}
so that we rewrite
\begin{align*}
  Q_j
  &=\Re\scal{\e^{-\gamma\ro_\epsilon} k_j\e^{\gamma\ro_\epsilon} g_j,\e^{\gamma\ro_\epsilon} k_j\e^{-\gamma\ro_\epsilon}g_j}\notag\\
  &=\Re\scal{(k_j-i\gamma\ro_\epsilon'\sigma_2) g_j,(k_j+i\gamma\ro_\epsilon'\sigma_2)g_j}\notag=\norm{k_j g_j}^2 - \gamma^2\norm{\ro_\epsilon'g_j}^2.
\end{align*}
Moreover, we have
\begin{equation}\label{k_j 2}
k_j^2=-\partial_r^2+A_j^2-\sigma_3 A_j'.
\end{equation}
In view of Remark \ref{rjr} and \eqref{con2} for any $\tilde\epsilon>0$ there exists $J_{\tilde\epsilon}>0$ such that for all $|j|>J_{\tilde\epsilon}$ one has
\begin{equation*}
 \scal{g_j, A_j' g_j}\le \tilde\epsilon\scal{g_j,A^2 g_j}
\end{equation*}
and therefore
\begin{equation}\label{A_j > A}
\scal{g_j,(A_j^2-\sigma_3A_j') g_j}\ge((1-\delta_0)^2-\tilde\epsilon)\scal{g_j,A^2g_j}.
\end{equation}
Now we drop the term $-\partial_r^2$ of \eqref{k_j 2}. This together with \eqref{A_j > A} yields
\begin{equation}\label{lower bnd}
\begin{split}
  Q_j&\ge((1-\delta_0)^2-\tilde\epsilon)\norm{Ag_j}^2-\gamma^2
  \norm{\ro_\epsilon'g_j}^2\\&\ge ((1-\delta_0)^2-\tilde\epsilon)\norm{Ag_j}^2-\gamma^2\norm{\ro'g_j}^2
=((1-\delta_0)^2-\tilde\epsilon-\gamma^2)
\norm{Ag_j}^2.
\end{split}
\end{equation}
{\bf Upper bound}.
We rewrite
\begin{align}
Q_j
&=\Re\scal{k_j\e^{\gamma\ro_\epsilon}g_j,\tf_{j}(E-V)\ef_j}
+\Re\scal{k_j\e^{\gamma\ro_\epsilon}g_j,-i\sigma_2\tf_{j}' \ef_j}\notag\\
&=\Re\scal{\e^{\gamma\ro_\epsilon}g_j,\tf_{j}(E-V)^2\ef_j}
+\Re\scal{\e^{\gamma\ro_\epsilon}g_j,\com{k_j,\tf_{j}(E-V)}\ef_j}\notag\\
&\quad +\Re\scal{\e^{\gamma\ro_\epsilon}g_j,-i\sigma_2\tf'_{j}(E-V)\ef_j}
+ \Re\scal{\e^{\gamma\ro_\epsilon}g_j,\com{k_j,-i\sigma_2 \tf'_{j}}\ef_j}\notag\\
&=\Re\scal{\e^{\gamma\ro_\epsilon}g_j,\tf_{j}(E-V)^2\ef_j}
+\Re\scal{\e^{\gamma\ro_\epsilon}g_j,\com{k_j,-i\sigma_2 \tf'_{j}}\ef_j},\notag
\end{align}
since
\begin{align*}
 \Re\scal{\e^{\gamma\ro_\epsilon}g_j,-i\sigma_2\tf'_{j}(E-V)\ef_j}= \Re\scal{\e^{\gamma\ro_\epsilon}g_j,\com{V,k_j}\tf_{j}\ef_j}=0.
\end{align*}
Furthermore, we use
\begin{align*}
\Re\scal{\e^{\gamma\ro_\epsilon}g_j,\tf_{j}(E-V)^2\ef_j}
=\norm{(E-V)\ g_j}^2.
\end{align*}
In addition, we find some $C>0$ such that
\begin{align*}
|\scal{\e^{\gamma\ro_\epsilon}g_j,\com{k_j,-i\sigma_2 \tf'_{j}}\ef_j}|
&=|\scal{\e^{\gamma\ro_\epsilon}g_j,(-\tf_{j}''+2\sigma_3A_j\tf_{j}')
  \ef_j}|\\
&\le C \frac{\e^{\gamma\ro(2r_j)}}{r_j}(\tfrac{1}{r_j}\norm{g_j}+\norm{A_j
g_j})\\
&\le C \frac{\e^{\gamma\ro(2r_j)}}{r_j}(\norm{g_j}+(1+\delta_0)\norm{A
g_j}),
\end{align*}
where in the last inequality we use that $r_j>1$ (for sufficiently
large $|j|$) and \eqref{A_j} together with the support properties of $\tf_j$.
We thus get
\begin{equation}\label{upper bnd}
\begin{split}
Q_j
&\le\norm{(E-V)\
    g_j}^2+ C \frac{\e^{\gamma\ro(2r_j)}}{r_j}(\norm{g_j}+(1+\delta_0)\norm{A g_j}).
\end{split}
\end{equation}
Then, combining \eqref{lower bnd} and \eqref{upper bnd} we get for $|j|>J_{\tilde{\epsilon}}$
\begin{equation}\label{eqqqq}
 \scal{g_j,(((1-\delta_0)^2-\tilde\epsilon-\gamma^2)A^2-(E-V)^2 )g_j}\le  C \frac{\e^{\gamma\ro(2r_j)}}{r_j}(\norm{g_j}+(1+\delta_0)\norm{A g_j}).
\end{equation}
According to \eqref{con1} and Remark \ref{rjr} we may pick
$\delta_0,\tilde\epsilon$ and $\gamma$ so small that there are constants
$J_{\delta_0,\tilde\epsilon,\gamma}, c_{\delta_0,\tilde\epsilon,\gamma}>0$
such that $|A|>1$ on the support of $\tf_j$ and, for all $|j|>J_{\delta_0,\tilde\epsilon,\gamma}$,
\begin{align}
  \label{eq:11}
  \scal{g_j,[((1-\delta_0)^2-\tilde\epsilon-\gamma^2)A^2-(E-V)^2 ]g_j}
\ge c_{\delta_0,\tilde\epsilon,\gamma}\norm{Ag_j}^2.
\end{align}
This together with \eqref{eqqqq} yields
\begin{align}\label{david}
 \norm{Ag_j}\le \frac{C}{c_{\delta_0,\tilde\epsilon,\gamma}}
 \frac{\e^{\gamma\ro(2r_j)}}{ r_j}(
{\norm{g_j}}/{\norm{Ag_j}}+(1+\delta_0))\le \frac{C}{c_{\delta_0,\tilde\epsilon,\gamma}}
 \frac{\e^{\gamma\ro(2r_j)}}{ r_j}(2+\delta_0).
\end{align}
The claim follows using the theorem of monotonic convergence for the
limit $\epsilon\to0$ of \eqref{david}.
\end{proof}
{\bf Acknowledgments.}  The authors want to thank Rafael Benguria for
useful discussions and remarks. J.-M.B has been supported by the project SQFT  ANR-12-JS01-0008-01. J.M. has been supported by SFB-TR12 ``Symmetries and
Universality in Mesoscopic Systems" of the DFG. E.S. has been supported by Fondecyt (Chile)
project 1141008 and Iniciativa Cient\'ifica Milenio (Chile) through
the Millenium Nucleus RC120002 ``F\'isica Matem\'atica''. A.T. has been supported by
the Millenium Nucleus RC120002 ``F\'isica Matem\'atica''.
\bibliographystyle{plain}
\bibliography{library}

\begin{thebibliography}{10}

\bibitem{BT99}
J.~M. Barbaroux and S.~Tcheremchantsev.
\newblock Universal lower bounds for quantum diffusion.
\newblock {\em J. Funct. Anal.}, 168(2):327--354, 1999.

\bibitem{B1952}
V.~Bargmann.
\newblock On the number of bound states in a central field of force.
\newblock {\em Proc. Nat. Acad. Sci. U. S. A.}, 38:961--966, 1952.

\bibitem{castro2009electronic}
A.~H. Castro~Neto, F.~Guinea, N.~M.~R. Peres, K.~S. Novoselov, and A.~K. Geim.
\newblock The electronic properties of graphene.
\newblock {\em Reviews of modern physics}, 81:109--162, 2009.

\bibitem{Chernoff77}
P.~R. Chernoff.
\newblock Schr\"odinger and {D}irac operators with singular potentials and
  hyperbolic equations.
\newblock {\em Pacific J. Math.}, 72(2):361--382, 1977.

\bibitem{RJLS2}
R.~Del~Rio, S.~Jitomirskaya, Y.~Last, and B.~Simon.
\newblock What is localization?
\newblock {\em Phys. Rev. Lett.}, 75(1):117--119, 1995.

\bibitem{RJLS96}
R.~del Rio, S.~Jitomirskaya, Y.~Last, and B.~Simon.
\newblock Operators with singular continuous spectrum. {IV}. {H}ausdorff
  dimensions, rank one perturbations, and localization.
\newblock {\em J. Anal. Math.}, 69:153--200, 1996.

\bibitem{giavaras2009magnetic}
G.~Giavaras, P.A. Maksym, and M.~Roy.
\newblock {Magnetic field induced confinement--deconfinement transition in
  graphene quantum dots}.
\newblock {\em J. Phys.: Condens. Matter}, 21:102201, 2009.

\bibitem{Gr}
G.~Grubb.
\newblock {\em Distributions and operators}, volume 252 of {\em Graduate Texts
  in Mathematics}.
\newblock Springer, New York, 2009.

\bibitem{katsnelson2006chiral}
M.I. Katsnelson, K.S. Novoselov, and A.K. Geim.
\newblock Chiral tunnelling and the {K}lein paradox in graphene.
\newblock {\em Nature Physics}, 2(9):620--625, 2006.

\bibitem{KS}
M.~K{\"o}nenberg and E.~Stockmeyer.
\newblock Localization of two-dimensional massless {D}irac fermions in a
  magnetic quantum dot.
\newblock {\em J. Spectr. Theory}, 2:115--146, 2012.

\bibitem{MS2014B}
J.~Mehringer and E.~Stockmeyer.
\newblock Ballistic dynamics of {D}irac particles in electro-magnetic fields.
\newblock {\em arXiv preprint arXiv:1411.5998}, 2014.

\bibitem{MS2014}
J.~Mehringer and E.~Stockmeyer.
\newblock Confinement-deconfinement transitions for two-dimensional {D}irac
  particles.
\newblock {\em J. Funct. Anal.}, 266(4):2225--2250, 2014.

\bibitem{miller1980quantum}
K.~Miller and B.~Simon.
\newblock Quantum magnetic hamiltonians with remarkable spectral properties.
\newblock {\em Phys. Rev. Lett.}, 44(25):1706--1707, 1980.

\bibitem{Novoselov2004}
K.~S. Novoselov, A.~K. Geim, S.~V. Morozov, D.~Jiang, Y.~Zhang, S.~V. Dubonos,
  I.V. Grigorieva, and A.A. Firsov.
\newblock Electric field effect in atomically thin carbon films.
\newblock {\em Science}, 306(5696):666--669, 2004.

\bibitem{RS4}
M.~Reed and B.~Simon.
\newblock {\em Methods of modern mathematical physics. {IV}. {A}nalysis of
  operators}.
\newblock Academic Press, New York, 1978.

\bibitem{KMSYamada}
K.~M. Schmidt and O.~Yamada.
\newblock Spherically symmetric {D}irac operators with variable mass and
  potentials infinite at infinity.
\newblock {\em Publ. Res. Inst. Math. Sci.}, 34(3):211--227, 1998.

\bibitem{Simon1990}
B.~Simon.
\newblock Absence of ballistic motion.
\newblock {\em Comm. Math. Phys.}, 134(1):209--212, 1990.

\bibitem{Thaller}
B.~Thaller.
\newblock {\em The {D}irac equation}.
\newblock Texts and Monographs in Physics. Springer-Verlag, Berlin, 1992.

\end{thebibliography}
\end{document}